\definecolor{lavender}{gray}{0.95}
\newtheorem{theorem}{Theorem}
\definecolor{tgray}{gray}{0.95}
\definecolor{color4}{RGB}{179, 43, 59}
\definecolor{green2}{rgb}{0, 153, 153}
\definecolor{wpurple}{rgb}{.341, .035, .953}
\definecolor{wblue}{RGB}{0,175,187}
\definecolor{wred}{RGB}{252,78,7}
\newtheorem*{remark}{Remark}
\newif\ifbulletlist
\newif\iftext
\def\BibTeX{{\rm B\kern-.05em{\sc i\kern-.025em b}\kern-.08em
    T\kern-.1667em\lower.7ex\hbox{E}\kern-.125emX}}
\begin{document}


\title{Optimal and Heuristic Approaches for Platooning Systems with  Deadlines
}
\author{
\IEEEauthorblockN{
Thiago S. Gomides\IEEEauthorrefmark{1},
Evangelos Kranakis\IEEEauthorrefmark{1},
Ioannis Lambadaris\IEEEauthorrefmark{2},
Yannis Viniotis\IEEEauthorrefmark{3},
Gennady Shaikhet\IEEEauthorrefmark{4}
}

\IEEEauthorblockA{\IEEEauthorrefmark{1}School of Computer Science, Carleton University, Ottawa, ON, Canada}

\IEEEauthorblockA{\IEEEauthorrefmark{2}Department of Systems and Computer Engineering, Carleton University, Ottawa, ON, Canada}

\IEEEauthorblockA{\IEEEauthorrefmark{3}Department of Electrical and Computer Engineering, North Carolina State University, Raleigh, NC, USA}

\IEEEauthorblockA{\IEEEauthorrefmark{4}School of Mathematics and Statistics, Carleton University, Ottawa, ON, Canada}

Email: \{thiagodasilvagomides@cmail, kranakis@scs, ioannis@sce, gennady@math\}.carleton.ca, candice@ncsu.edu.}

\maketitle
\ifbulletlist{\color{blue} 
\begin{enumerate}

{\color{red}\item  MISSING MOTIVATON: importance of the problem
\item Missing "marketing"; difficulty.
}

\item Problem Context

\item System Description

\item Objective

\item Model Formulation

\item Analytical Scope

\item Structural Properties

\item Computational Challenge

\item Proposed Solution
\end{enumerate}
}
\fi

\begin{abstract}
Efficient truck platooning is a key strategy for reducing freight costs, lowering fuel consumption, and mitigating emissions. Deadlines are critical in this context, as trucks must depart within specific time windows to meet delivery requirements and avoid penalties. In this paper, we investigate the optimal formation and dispatch of truck platoons at a highway station with finite capacity $L$ and deadline constraints $T$. The system operates in discrete time, with each arriving truck assigned a deadline of $T$ slot units. The objective is to leverage the efficiency gains from forming large platoons while accounting for waiting costs and deadline violations. We formulate the problem as a Markov decision process and analyze the structure of the optimal policy $\pi^\star$ for $L = 3$, extending insights to arbitrary $L$. We prove certain monotonicity properties of the optimal policy in the state space $\mathcal{S}$ and identify classes of unreachable states. Moreover, since the size of $\mathcal{S}$ grows exponentially with $L$ and $T$, we propose heuristics--including conditional and deep-learning based approaches--that exploit these structural insights while maintaining low computational complexity.
\end{abstract} \begin{IEEEkeywords}
Optimal Control, Heuristics, Truck Platooning.
\end{IEEEkeywords}\section{Introduction and Motivation}
\ifbulletlist{\color{blue} 
\begin{enumerate}
    \item Define truck platooning and highlight its potential benefits (three benefits).
    \item Emphasize the importance of analyzing platooning under rigorous mathematical models.
\end{enumerate}
}
\fi

Platooning refers to vehicle convoys that travel in close formation, similar to a train or motorcade. In their simplest form, platoons can form naturally on busy roads~\cite{d243b67ca2da4b7e88c224fa5f0ce3af}. In practice, however, maintaining such convoys requires advanced communication and automation technologies~\cite{Hao04052025,BALADOR2022100460,TSUGAWA201341}.

In this work, we are particularly interested in the formation of \textit{truck platoons}---an effective strategy for reducing freight costs, especially fuel consumption~\cite{TSUGAWA201341, ZHANG2024105106, 11161316}, while lowering greenhouse gas emissions and improving highway safety~\cite{Alvarez01071999}.

Effective \textit{coordination} is essential to realizing these benefits: trucks must be grouped and dispatched to maximize platoon size~\cite{ZHANG2017357} while respecting deadlines~\cite{9944383, 9102259}, vehicle-level requirements, and technological constraints~\cite{LARSSON2015258}.

Motivated by these challenges, we study the optimal formation of truck platoons at highway stations (e.g., gas stations or rest areas) with finite capacity $L$ and deadline constraints~$T$.

\subsection{Related Work}
\ifbulletlist{\color{blue} 
\begin{enumerate}
    \item Review of recent research on platooning with deadlines. (2-3 papers)
    \item Discussion of relevant related work, including \url{https://ieeexplore.ieee.org/stamp/stamp.jsp?tp=\&arnumber=9944383} (3rd or 4th).
\end{enumerate}
}\fi 

Platooning with deadlines has been primarily studied in two domains: 1) \textit{path planning}~\cite{ZHANG2017357, 9944383, 9102259} and 2) \textit{vehicle routing}~\cite{LARSSON2015258}. In path planning, the focus is on continuous trajectory control---adjusting speed and spacing so that trucks can merge into platoons while meeting time requirements. In vehicle routing, the objective is to determine optimal routes and departure schedules across a network. Both domains aim to form energy-efficient platoons while respecting delivery deadlines.

In~\cite{ZHANG2017357}, the authors study the formation of a two-truck platoon with stochastic arrivals at a highway station. A platoon forms if both trucks arrive simultaneously; otherwise, one truck must wait, incurring delay costs. The optimal policy forms a platoon only when the waiting time does not exceed a specified deadline. In~\cite{9944383}, a similar problem is considered, extended to include \textit{speed planning}. In~\cite{9102259}, deep reinforcement learning at an edge node is used to optimize platooning opportunities, with the goal of maintaining platoon stability, minimizing fuel consumption, and satisfying deadlines.

In~\cite{LARSSON2015258}, platoon routing is examined for trucks with deadlines traveling across stations in a road network. The problem is modelled as a graph-routing problem and solved using integer linear programming. Optimal solutions are computed for small instances, while three heuristics handle larger scenarios. 

\subsection{Novelty and Main Contributions}
\ifbulletlist{\color{blue} 
\begin{enumerate}
    \item A novel cost function and system model.
    \item First platooning framework with deadline extendable to arbitrary $L$ and $T$.
    \item Heuristics derived from structural properties of the optimal policy.
    \item Benchmarking of policies and analysis of computational overhead.
\end{enumerate}
}\fi 

In this work, we study the optimal formation of truck platoons under deadline constraints. Similar to~\cite{ZHANG2017357, 9944383, LARSSON2015258} (and unlike~\cite{9102259}), we consider platoon formation at a highway station with stochastic truck arrivals. Unlike~\cite{ZHANG2017357, 9944383, 9102259}, we do not restrict platoon size. As in~\cite{ZHANG2017357}, we formulate an optimal control problem and solve it using dynamic programming (DP).  Similar to~\cite{LARSSON2015258}, we design scalable heuristics to overcome the computational challenges of DP.

The main contributions of this work are as follows:
\begin{itemize}
\item For station size $L = 3$, we prove \textit{monotonicity properties} of the optimal policy $\pi^\star$ and identify \textit{unreachable states}.
\item We generalize insights from the $L = 3$ case to arbitrary~$L$.
\item We propose and numerically evaluate scalable, near-optimal heuristics that exploit structural insights of $\pi^\star$.
\end{itemize}
The paper is organized as follows. In Section~\ref{sec:formulation},  we formulate the platooning model, and in Section~\ref{sec:control},  we introduce the control problem. In Section~\ref{sec:characterization}, we characterize the expected average cost and the optimal policy, while in Section~\ref{sec:heuristics}, we present our heuristic policies. Numerical results are presented in Section~\ref{sec:performance}. We conclude  with a discussion of future work in Section~\ref{sec:conclusion}. 

\section{Platooning Model}\label{sec:formulation}
\ifbulletlist{\color{blue} 
\begin{enumerate}
    \item Description of the station where truck platoons are formed. Cost structure associated with station operations.
    \item Controller in charge of forming platoons. Motivation for cost control.
    \item Station operations: deadline assignment and countdown.
    \item Actions available to the controller.
\end{enumerate}
}\fi

Consider a highway station with finite capacity to hold $L$ trucks, where arriving trucks may form platoons prior to departure. This system favours the formation of platoons of size $L$ (i.e., \emph{full platoons}) because they are more cost-efficient. Trucks incur dwell costs while waiting at the station, and penalties apply if they exceed their deadlines. Figure~\ref{fig:overview} illustrates this system.
\begin{figure}[htpb]
\centering
\includegraphics[width=\linewidth]{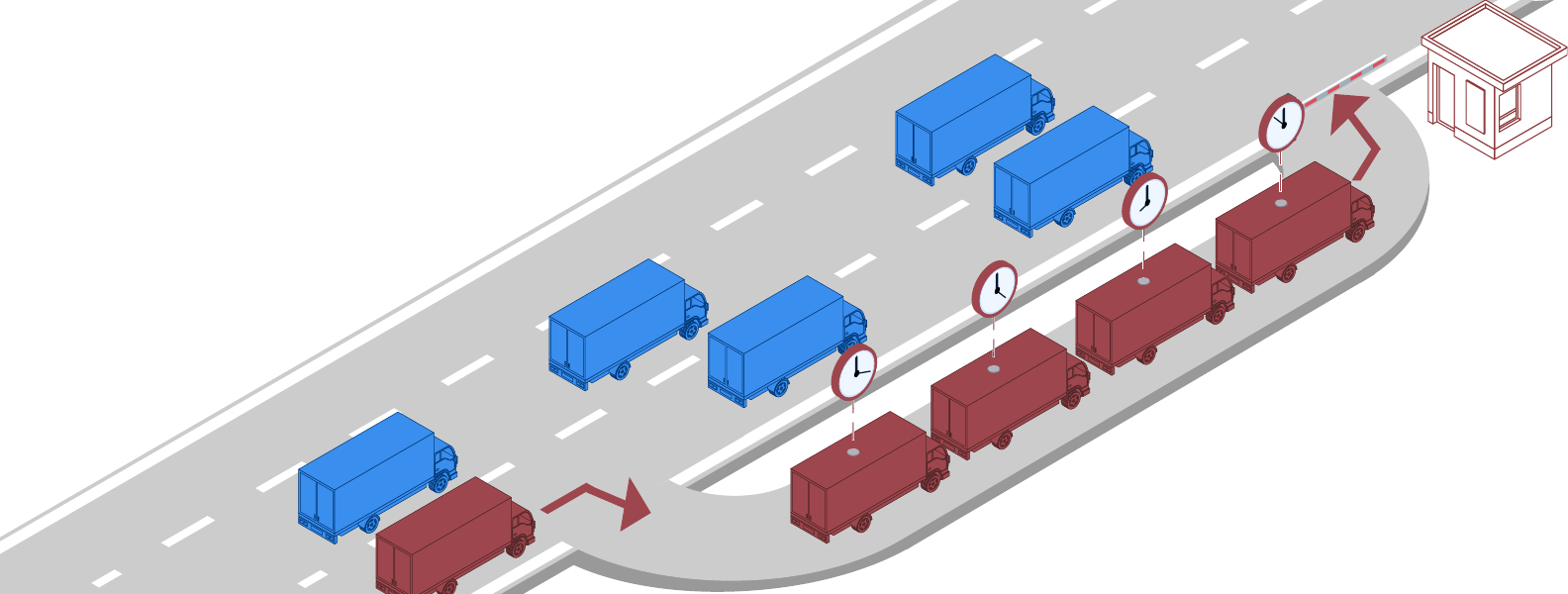}
\caption{System model illustration.}
\label{fig:overview}
\end{figure}

A centralized controller manages the station's operations, aiming to balance the efficiency gains from full platoons against the costs of waiting and potential deadline expirations.

Each arriving truck is assigned $T$ credits (in time units) upon arrival, representing its \textit{deadline}. Credits are decremented over time, and trucks with zero credits must depart from the station.

At each decision epoch, the controller chooses between two actions: to \emph{release} the waiting trucks as a platoon or to \emph{hold} them and wait for additional arrivals.

\section{Control Problem Formulation}\label{sec:control}
\ifbulletlist{\color{blue} 
\begin{enumerate}
	\item Need an opening statement for the general problem (point it to section VI-VII) and $L=3$ as a base case
    \item Description of the time model (discrete slots).
    \item Assumptions on the arrival process (Bernoulli with parameter $p$).
\end{enumerate}
}\fi

For simplicity, in this section we focus on the case where the station capacity is fixed at $L = 3$. The general case with arbitrary $L$ is discussed in Sections~\ref{sec:heuristics}--\ref{sec:performance}. The corresponding platooning system for $L=3$ can be illustrated as in Figure~\ref{fig:deadlines1}.\begin{figure}[htpb]
  \centering
\includegraphics[width=.6\linewidth]{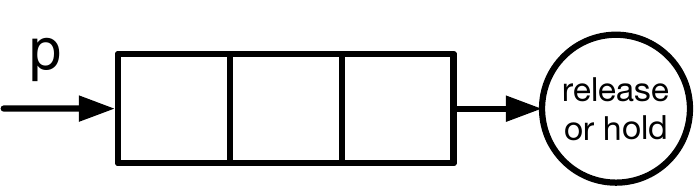}
  \caption{System model illustration for $L=3$.}
  \label{fig:deadlines1}
\end{figure}


Time is modelled in discrete slots indexed by $n \in N$, each representing a fixed and uniform interval of real time (e.g., a few minutes). Deadlines are expressed in slot units and are decremented at the beginning of each slot.

A truck arrives with probability $p\in(0,1)$, independently across slots, according to a Bernoulli arrival process. 


\subsection{Markov Decision Process (MDP) Formulation}\label{sec:markov}
\ifbulletlist{\color{blue} 
\begin{enumerate}
    \item State representation ($s$: remaining deadlines per truck; strictly increasing, related to Catalan numbers).
    \item Event definition (clean up phase and arrivals).
    \item Action set ($a_n = 0,1$).
    \item Mapping of events, actions, and cost computation within a time slot.
    \item Instantaneous cost function.
    \item State transition function $f(s,a,e)$.
\end{enumerate}
}\fi

\subsubsection{States} The system state is represented by the state vector $
	s = (d_3,\, d_2,\, d_1),$ where each $d_i \in \{1, \dots, T, \infty\}$ denotes the remaining deadline of the truck in position $i$, and $d_i = \infty$ indicates that the $i$-th position is unoccupied. 

By construction, the order of the deadlines is strictly decreasingly with respect to position, that is, 
\begin{align}
d_{i+1} > d_{i}, \quad \text{for all } d_{i} < \infty,
\label{eq:assumption}
\end{align}
 which ensures that trucks in lower-index positions always have earlier deadlines.  
 
Let $\mathcal{S}$ denote the set of all valid system states, defined by all feasible combinations of deadlines for waiting trucks. The cardinality of this set, $|\mathcal{S}|$, is given by
\begin{align}
|\mathcal{S}| = \sum_{k\,=\,0}^{L-1} \binom{T}{k} + \binom{T-1}{L-1},
\label{eq:catalan}
\end{align}
which is characterized by the \textit{Catalan numbers}~\cite{Stanley_2015}.


\subsubsection{Events}


Each time slot consists of two consecutive events: (\textit{i}) a \emph{deadline decrement} and (\textit{ii}) an \emph{arrival}. During the deadline decrement, the state vector $(d_3,\, d_2,\, d_1)$ evolves to $(d_3-1,\, d_2-1,\, d_1-1)$. If $d_1-1 = 0$, the corresponding truck departs from the system. For components with $d_i = \infty$, we also have $d_i-1 = \infty$, in accordance with Eq.~\eqref{eq:assumption}. To maintain strictly decreasing order, the remaining components are then shifted to the right (i.e., $d_{i+1} \rightarrow d_i$), and $d_L = \infty$.

During the arrival phase, a new truck may enter the system. Let \(e_n = \mathds{1}\{\text{truck arrives at slot } n\}\), with $\mathbb{P}(e_n = 1) = p$ and $\mathbb{P}(e_n = 0) = 1 - p$. If $e_n = 1$, $s$ is updated by inserting a new component with deadline $T$ into the lowest available index.

For instance, if $s = (\infty,\, \infty,\, d_1)$ and $e = 1$, the updated state after both events occurs becomes $s' = (\infty,\, T,\, d_1-1)$.


\subsubsection{Actions} At each decision epoch, the controller selects an action $a_n$, where
\begin{align}
a_n =
	\begin{cases}
		0, & \text{\text{(hold)} trucks at the station}, \\
		1, & \text{\text{(release)} trucks as a platoon}.
	\end{cases}
	\label{eq:actions}
\end{align}


\subsubsection{Transition Probabilities}

Let $f(s, a, e)$ denote the mapping of the next state after applying action $a$ in state $s$, given the arrival indicator $e$. In the following expressions, $d_i = \infty$ for all $i$, unless stated otherwise. For $L = 3$, we have:{\begin{align*}
&f(s, a, e) =\\
&\begin{cases}
(\infty, \infty, \infty), & \text{if } a = 1, \forall s,   \forall e, \\
(\infty, \infty, \infty), & \text{if } a = 0,\, e = 0,\, d_1 = 1, \\
(\infty, \infty, T), & \text{if } a = 0,\, e = 1,\, d_1 = 1, \\
(\infty, \infty, d_2 - 1), & \text{if } a = 0,\, e = 0,\, d_1 = 1,\, d_2 \leq T, \\
(\infty, T, d_2 - 1), & \text{if } a = 0,\, e = 1,\, d_1 = 1,\, d_2 \leq T, \\
(\infty, \infty, d_1 - 1), & \text{if } a = 0,\, e = 0,\, d_1 \in (1,T], \\
(\infty, T, d_1 - 1), & \text{if } a = 0,\, e = 1,\,  d_1 \in (1,T], \\
(\infty, d_2 - 1, d_1 - 1), & \text{if } a = 0,\, e = 0,\{d_1,d_2\} \in (1, T],
\\
(\infty, \infty, \infty), & \text{if }  a=0,\, e = 1,\, \{d_1,d_2\} \in (1, T],
\end{cases}\end{align*}}where last branch corresponds to a forced dispatch that occurs when a third truck arrives and the station reaches its capacity.



Since arrivals are stochastic, the transition probabilities of the MDP are given by:
\begin{align*}
\mathbb{P}(s_{n+1} = s' \mid s_n = s,\, a_n = a) =
\begin{cases}
p, & \text{if } s' = f(s, a, 1),\\[2mm]
1-p, & \text{if } s' = f(s, a, 0).
\end{cases}
\end{align*}

\subsubsection{Time Slot Representation} 
The sequence of events within a single time slot is illustrated in Figure~\ref{fig:deadlines3}.\begin{figure}[H]
  \centering
\includegraphics[width=.8\linewidth]{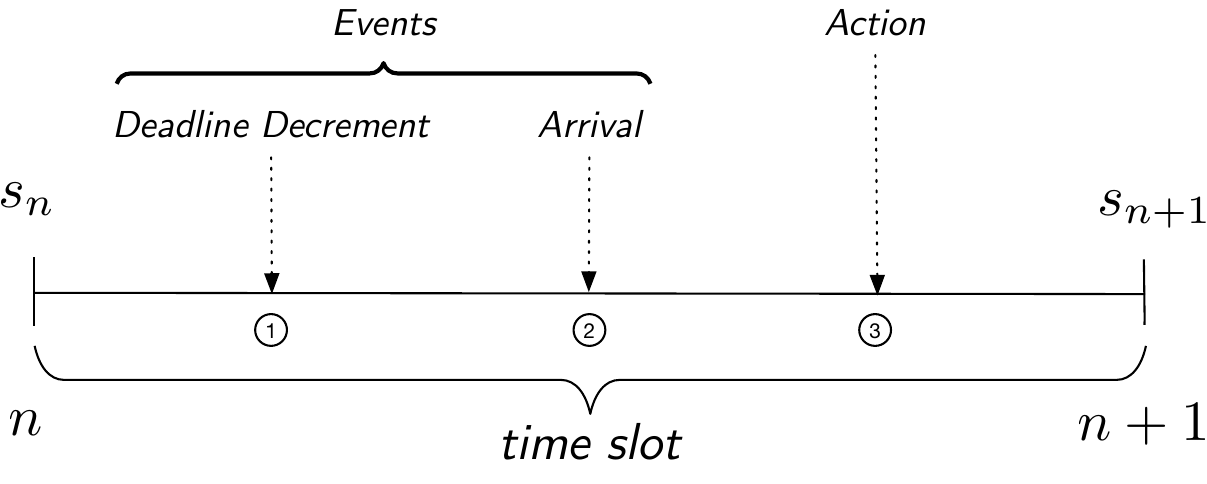}
  \caption{Sequence of events within a time slot.}
  \label{fig:deadlines3}
\end{figure}

\subsubsection{Costs}  Let $\mathcal{C}_{\text{ex}}$ denote the penalty incurred when a truck deadline expires at the station, and let $\omega$ represent the waiting cost per truck per slot. 

The cost of dispatching a platoon depends on its size $\ell > 0$,  $L$, a scaling factor $\gamma \in (0,1]$, and  $\mathcal{C}_{\text{ex}}$. Formally,\begin{align}
\mathcal{C}_{\text{pt}}(\ell,\, L,\, \gamma,\, \mathcal{C}_{\text{ex}}) = 
\begin{cases}
   0, & \text{if } \ell = L, \\\left(1 - \frac{\ell}{L}\right)  \gamma\, \mathcal{C}_{\text{ex}}, & \text{otherwise},
\end{cases}
\label{eq:cplatoon}
\end{align}

For simplicity, we will write $\mathcal{C}_{\text{pt}}(\ell)$ to refer to the cost in Eq.~\eqref{eq:cplatoon} with fixed $\gamma$, $\mathcal{C}_{\text{ex}}$, and $L$.

 To ensure that releasing full platoons is always preferred, we impose the following cost ordering:
\begin{align}
\mathcal{C}_{\text{pt}}(L) < \ell \cdot \omega < \mathcal{C}_{\text{pt}}(\ell) < \mathcal{C}_{\text{ex}}, \quad 1\le \ell < L. \label{eq:cost_ordering}
\end{align}

\subsubsection{Instantaneous Cost}

The instantaneous cost is computed after the sequence of events is completed and an action is taken. For $s = s_n$, $a = a_n$, and $e = e_n$, this cost is defined as 
\begin{align}
\textsc{ic}(s, a, e) = c_{\text{ex}}(s) + c_{\text{dp}}(s, a, e) + c_{\text{wt}}(s, a, e).
\label{eq:ic}
\end{align}
The three cost components of $\textsc{ic}(s, a, e)$ are defined as follows. Here, $|s|$ denotes the number of finite  deadlines in $s$.
\paragraph{Expiration cost}
\begin{align*}
c_{\text{ex}}(s) = 
\begin{cases}
\mathcal{C}_{\text{ex}}, & \text{if } d_1 = 1, \\
0, & \text{otherwise},
\end{cases}
\end{align*}

\paragraph{Dispatch cost}
\begin{align*}
c_{\text{dp}}(s, a, e) = 
\begin{cases}
\mathcal{C}_{\text{pt}}(|s|), & a = 1,\, e = 0,\, 1 < d_1 \leq T, \\
\mathcal{C}_{\text{pt}}(|s| + 1), & a = 1,\, e = 1,\, 1 < d_1 \leq T, \\
\mathcal{C}_{\text{pt}}(|s| - 1), & a = 1,\, e = 0,\, d_1 = 1, \\
\mathcal{C}_{\text{pt}}(|s|), & a = 1,\, e = 1,\, d_1 = 1, \\
\mathcal{C}_{\text{pt}}(1), & a = 1,\, e = 1,\, d_1 = \infty, \\
0, & \text{otherwise}.
\end{cases}
\end{align*}

\paragraph{Waiting cost}
\begin{align*}
c_{\text{wt}}(s, a, e) = 
\begin{cases}
|s| \, \omega, & a = 0,\, e = 0,\, 1 < d_1 \leq T, \\
(|s| + 1) \, \omega, & a = 0,\, e = 1,\, 1 < d_1 \leq T, \\
(|s| - 1) \, \omega, & a = 0,\, e = 0,\, d_1 = 1, \\
|s| \, \omega, & a = 0,\, e = 1,\, d_1 = 1, \\
\omega, & a = 0,\, e = 1,\, d_1 = \infty, \\
0, & \text{otherwise}.
\end{cases}
\end{align*}
Since $e_n$ is a random variable, $\textsc{ic}(s, a, e)$ is a stochastic quantity. This randomness is explicitly accounted for in Section~\ref{sec:characterization}.


\subsection{Modelling Assumptions}
\ifbulletlist{\color{blue} 
\begin{enumerate}
    \item Justification of the bang-bang nature of release actions.
    \item Rationale for the cost structure: no cost for platoons of size $L$, penalty for smaller platoons.
    \item Justification of $(\infty, \infty, \infty)$ as the initial state.
\end{enumerate}
}\fi 

The available control actions are \emph{bang-bang}, meaning the controller either releases all trucks or none. This assumption simplifies the analysis by reducing the number of control branches in Eqs.~\eqref{eq:actions}--\eqref{eq:ic}, which is critical given the cardinality of $\mathcal{S}$. Moreover, it ensures that every dispatch maps the system to the empty state $(\infty, \infty, \infty)$, simplifying comparisons by eliminating repeated terms whenever $a = 1$.

This assumption is not overly restrictive: while partial releases could be modelled using a more refined state and cost structure, the current model naturally favours full releases.

To capture the benefits of forming full platoons, we defined the cost $\mathcal{C}_{\text{pt}}(\ell)$ as a decreasing function of  $\ell$, interpreted as a penalty for releasing smaller platoons. This penalty is zero when $L$ trucks are dispatched. It is worth noting that this does not imply that a full platoon incurs no cost, but rather that it carries no additional penalty relative to smaller platoons.

The system is assumed to start empty at slot $n = 0$, i.e., $s_0 = (\infty, \infty, \infty)$. This assumption is without loss of generality, as any initial state with finite deadlines would correspond to trucks that arrived prior to the beginning of the horizon, which are excluded by definition.

\section{Expected Average Cost and Characterization of the Optimal Policy}\label{sec:characterization}
\ifbulletlist{\color{blue} 
\begin{enumerate}
    \item Definition of a average cost of an stationary policy.
    \item Formulation of the expected average instantaneous cost under a stationary policy.
    \item Formal dynamic programming equation.
    \item        Difference between hold and dispatch actions to characterize optimality conditions.
    \item Introduction of two key structural properties: monotonicity and unreachable states.

\end{enumerate}
}\fi

Let $\pi$ be a stationary policy, i.e., a time-independent mapping from states to actions, so that $a_n = \pi(s_n)$. The expected average cost under $\pi$ over a horizon of length $N$ is defined as
\begin{align}
	\dfrac{1}{N}\, \mathbb{E}^\pi \Bigg[ \sum_{n\,=\,0}^{N-1} \textsc{ic}(s_n, a_n, e_n) \Bigg],
	\label{eq:average}
\end{align}
where the expectation is taken with respect to the sample paths induced by~$\pi$.


Define $V_{n+1}(s)$ as the optimal expected cost function over the next $n+1$ steps when the system starts in state $s = s_0$. The dynamic programming recursion is given by:\begin{align}
 V_{n+1}(s) &= \min_{a \in \{0,1\}} Q_{n+1}(s,a),
    \label{eq:dp}
\end{align}
where
\begin{align}
	\notag	Q_{n+1}(s,a)  = 
	(1 - p)&\big[ \textsc{ic}(s, a, 0) + V_{n}(f(s, a, 0)) \big] \, + \\
	&\quad p \big[ \textsc{ic}(s, a, 1) + V_{n}(f(s, a, 1)) \big] ,
\end{align}
for all $n \ge 1$, with terminal condition $V_0(s) = 0$ for all $s$.

\begin{figure*}[htpb]
\centering
    \begin{subfigure}{.32\linewidth}
 	\includegraphics[width=\linewidth]{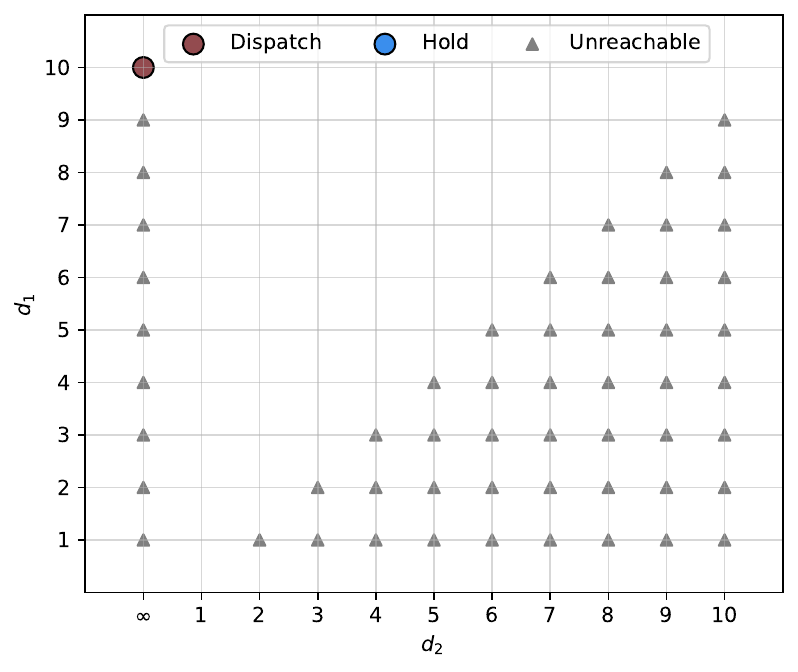}
        \caption{Properties \textbf{c)} and \textbf{d)}.}
        \label{fig:subfig_a}
    \end{subfigure}
    \begin{subfigure}{0.32\linewidth}
        \includegraphics[width=\linewidth]{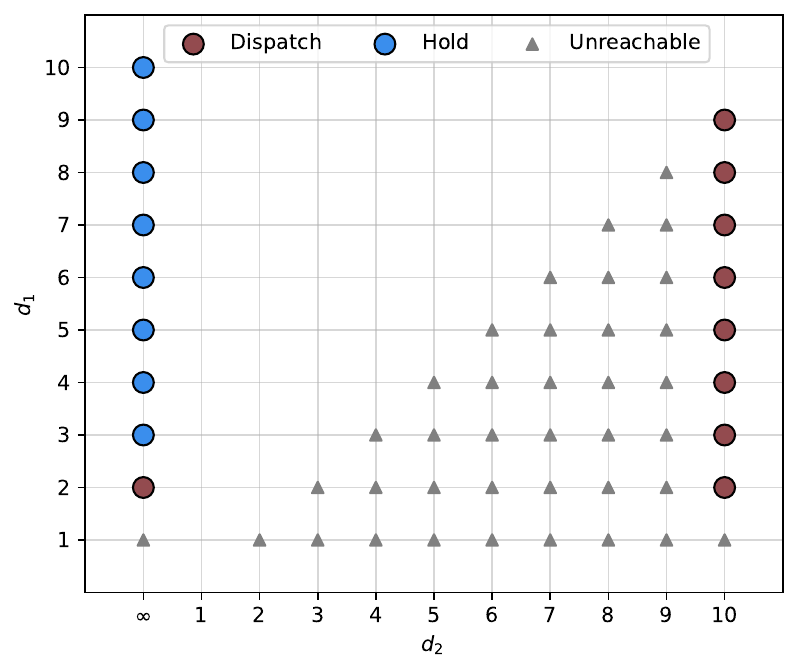}
        \caption{Properties \textbf{a)}, \textbf{c)}, and \textbf{e)}.}
        \label{fig:subfig_b}
    \end{subfigure}
    \begin{subfigure}{0.32\linewidth}
        \includegraphics[width=\linewidth]{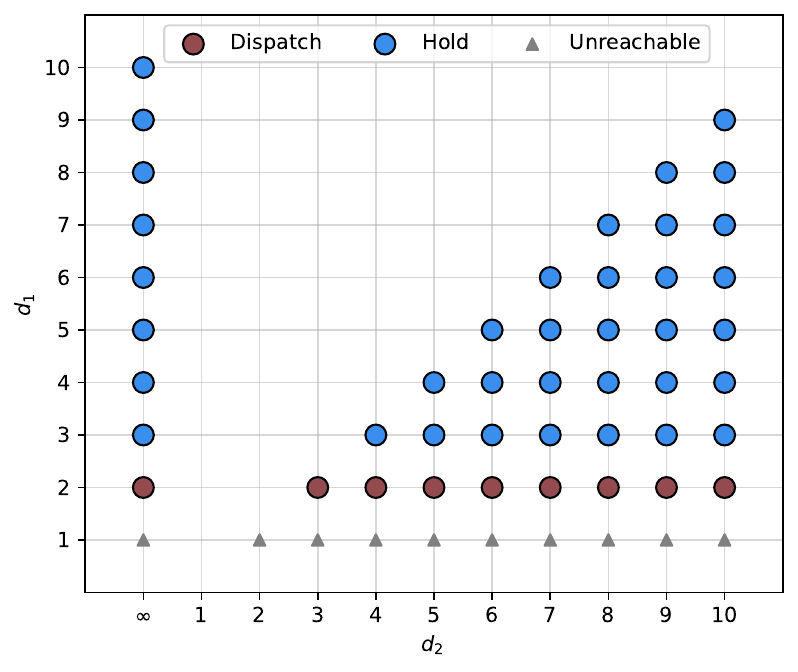}
        \caption{Properties \textbf{a)}, \textbf{b)}, and \textbf{e)}.}
        \label{fig:subfig_c}
    \end{subfigure}
    
\caption{Visualization of monotonicity and unreachability properties of the optimal policy.}    \label{fig:policy_comparison}
\end{figure*}

Now, define the \emph{cost difference} between holding and dispatching at time $n$ in state $s$ as\begin{align}
	\Delta_{n}(s) &= Q_{n}(s, 0) - Q_{n}(s, 1).
\label{eq:delta}
\end{align}

From Eq.~\eqref{eq:delta}, the optimal action $a^\star$ is determined by the sign of $\Delta_{n}(s)$, i.e.,
\begin{align}
a^\star=\pi^\star(s) =
\begin{cases}
0 \text{ (hold)}, & \text{if } \Delta_{n}(s) \le 0,\\
1 \text{ (release)}, & \text{if } \Delta_{n}(s) > 0.
\end{cases}
\end{align}

Next, we show that $\Delta_n(s)$ is \emph{monotone} in $s$. This implies that the optimal action at state $s$ is consistent with the actions at neighbouring states $s' = (d_3', d_2', d_1')$, for $d_i' \in \{d_i, d_i \pm 1\}$.

\subsection{Monotonicity of $\Delta(s)$}
\ifbulletlist
{\color{blue} 
\begin{enumerate}
    \item Main theorem summarizing the three monotonicity properties.
    \item Diagonal and Vertical monotonicity of the hold action.
    \item Monotonicity of the dispatch action.
    \item Example
\end{enumerate}
}\fi

Theorem~\ref{thm:monotonicity} summarizes the key monotonicity properties satisfied by the optimal policy.
\begin{theorem}
\label{thm:monotonicity}
The optimal policy in the platooning model satisfies the following monotonicity properties:
\begin{enumerate}
    \item[\textbf{a)}] \textbf{Tail monotonicity}:  
    If the holding action is optimal at state $s = (\infty,\, \infty,\, d_1)$, then it remains optimal at the shifted-up state $s' = (\infty, \,\infty,\, d_1+1)$.  

    \item[\textbf{b)}] \textbf{Diagonal monotonicity}:  
    If the holding action is optimal at state $s = (\infty,\, d_2,\, d_1)$, then it remains optimal at the diagonally shifted-up state $s' = (\infty,\, d_2+1,\, d_1+1)$.

    \item[\textbf{c)}] \textbf{Dispatch monotonicity}:  
    If the dispatching action is optimal at state $s = (\infty,\, d_2,\, d_1)$, then it remains optimal at all adjacent states with tighter deadlines, namely  $s' \in \{ (\infty,\, d_2,\, d_1 - 1), (\infty,\, d_2 - 1,\, d_1), (\infty,\, d_2 - 1,\, d_1 - 1) \}.$ \end{enumerate}
\end{theorem}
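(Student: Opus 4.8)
The plan is to reduce all three properties to a single \emph{monotonicity} statement for the value function $V_n$ and to prove that statement by induction on the horizon $n$. First I would observe that in the cost difference $\Delta_{n}(s)=Q_{n}(s,0)-Q_{n}(s,1)$ the expiration term $c_{\text{ex}}(s)$ is common to both actions (it depends on $s$ only, not on $a$) and therefore cancels; likewise, because every dispatch sends the system to $(\infty,\infty,\infty)$, the term $V_{n-1}(f(s,1,e))=V_{n-1}(\infty,\infty,\infty)$ does not depend on $s$. Hence the $s$-dependence of $\Delta_n$ enters only through the hold successors $f(s,0,e)$ and the platoon size $|s|$. The target invariant is that \emph{$V_n$ is non-increasing under any deadline relaxation}, i.e.\ $V_n(s')\le V_n(s)$ whenever $s'$ is obtained from $s$ by (i) a tail shift $d_1\mapsto d_1+1$, (ii) a single-coordinate shift $d_i\mapsto d_i+1$ that keeps $s'$ valid, or (iii) a diagonal shift $(d_2,d_1)\mapsto(d_2+1,d_1+1)$. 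I would carry directions (i)--(iii) simultaneously because, owing to the strict ordering in Eq.~\eqref{eq:assumption}, a diagonal shift cannot in general be decomposed into two single-coordinate shifts: when $d_2=d_1+1$ the intermediate state $(\infty,d_1+1,d_1+1)$ is not valid.

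The induction runs as follows. The base case is immediate since $V_0\equiv 0$. For the step, write $V_{n+1}(s)=\min\{Q_{n+1}(s,0),Q_{n+1}(s,1)\}$ and show each branch is non-increasing under (i)--(iii); the minimum of non-increasing functions is then non-increasing. For the dispatch branch, $Q_{n+1}(s,1)$ reduces to $c_{\text{ex}}(s)+(1-p)\,c_{\text{dp}}(s,1,0)+p\,c_{\text{dp}}(s,1,1)+V_n(\infty,\infty,\infty)$, which depends on $s$ only through $|s|$ and $c_{\text{ex}}$; since every relaxation in (i)--(iii) preserves the occupancy pattern (hence $|s|$) and can only remove an expiration, $Q_{n+1}(\cdot,1)$ is non-increasing. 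For the hold branch, the key is that relaxing $s$ maps each successor to the \emph{same type} of relaxation of the corresponding successor of $s'$: a tail, single-coordinate, or diagonal shift of $s$ induces the analogous shift of $f(s,0,e)$ (e.g.\ $(\infty,d_2,d_1)\mapsto(\infty,d_2+1,d_1+1)$ sends the $e=0$ successor $(\infty,d_2-1,d_1-1)$ to $(\infty,d_2,d_1)$, again a diagonal relaxation). The inductive hypothesis applied to these successors then yields $Q_{n+1}(s',0)\le Q_{n+1}(s,0)$.

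With $V_n$ monotone, $\Delta_n$ inherits the sign monotonicity needed: along a relaxation direction the hold successors weakly decrease while the dispatch cost is unchanged, so $\Delta_n(s')\le\Delta_n(s)$. For \textbf{a)} and \textbf{b)}, if holding is optimal at $s$ (so $\Delta_n(s)\le 0$) then the tail, resp.\ diagonal, relaxation $s'$ satisfies $\Delta_n(s')\le\Delta_n(s)\le 0$, so holding stays optimal. For \textbf{c)}, the three listed neighbours are obtained from $s$ by \emph{tightening} (the reverse of a single-coordinate or diagonal relaxation); tightening can only increase $\Delta_n$, so $\Delta_n(s)>0$ forces $\Delta_n(s')>0$ and dispatching stays optimal. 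Since the invariant holds for every finite $n$, it is inherited by the stationary average-cost optimal policy obtained as $n\to\infty$.

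The main obstacle is the boundary case $d_1=1$, where a truck expires and departs: here the relaxation $d_1\mapsto d_1+1$ switches the hold successor from the departure-and-shift branch of $f$ to the plain-decrement branch and simultaneously removes the penalty $\mathcal{C}_{\text{ex}}$, so the clean successor-to-successor correspondence used above must be re-established by hand across this qualitative change (the ordering in Eq.~\eqref{eq:cost_ordering}, in particular $\mathcal{C}_{\text{pt}}(1)<\mathcal{C}_{\text{ex}}$, is what makes monotonicity survive it). Two further bookkeeping points must be checked but are not serious: shifts must be restricted so that $s'$ remains valid (deadlines capped at $T$ and strictly ordered), and the forced-dispatch branch (a hold action that maps to $(\infty,\infty,\infty)$ when a third truck arrives at capacity) has an $s$-independent successor and therefore cannot break monotonicity.
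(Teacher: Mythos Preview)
Your approach is essentially the paper's: both argue by induction on the horizon $n$ that $V_n$ is non-increasing under deadline relaxations, using the successor-to-successor correspondence (the relaxation of $s$ maps $f(s,0,e)$ to a relaxation of itself) together with the $s$-independence of the dispatch successor $f(s,1,e)=(\infty,\infty,\infty)$. The paper runs three separate inductions, one per property, whereas you package tail, single-coordinate, and diagonal shifts into a single invariant and read off \textbf{a)}, \textbf{b)}, \textbf{c)} as corollaries of the sign monotonicity of $\Delta_n$. This is tidier and has the small payoff that your direction~(ii), single-coordinate monotonicity at two-truck states, is exactly what is needed for the two non-diagonal neighbours $(\infty,d_2,d_1-1)$ and $(\infty,d_2-1,d_1)$ in \textbf{c)}; the paper appeals to ``monotonicity established in Properties \textbf{a)} and \textbf{b)}'' at that step, which strictly speaking only delivers the diagonal neighbour. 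Your identification of the $d_1=1$ boundary as the main obstacle, and of the cost ordering~\eqref{eq:cost_ordering} as the ingredient that repairs the broken successor correspondence there, matches what the paper does by case inspection.
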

\begin{proof}

\textbf{\textit{a) Tail Monotonicity:}
}
Consider $s = (\infty, \infty, d_1)$ and $s' = (\infty, \infty, d_1 + 1)$. We proceed by induction on n.
\vspace{.2cm}

\noindent\textbf{Base case ($n=1$):}
Since $V_0(s) = 0,$ the value function reduces to the expected instantaneous cost at $n = 1$, i.e.,
\begin{align}
V_1(s) = \min_{a \in \{0,1\}} \big[(1-p),\textsc{ic}(s,a,0) + p,\textsc{ic}(s,a,1) \big].
\end{align}

We claim that action $a = 0$ is optimal for all states $s = (\infty,\, \infty, \, d_1)$ with finite $d_1$ at iteration $n = 1$, i.e.,
\begin{align}
V_1(s) = Q(s,0).
\end{align}

We analyze two cases under the assumption $d_1 \le T$, comparing actions $a = 0$ and $a = 1$:\begin{itemize}
\item \textbf{Case 1: $d_1 = 1$}.  
The earliest truck expires immediately, so both actions incur an expiration penalty. Using Eq.\eqref{eq:ic}, \begin{align}
\notag (1-p)&(\mathcal{C}_{\text{ex}}) + p(\omega + \mathcal{C}_{\text{ex}}) \\ &\le (1-p)(\mathcal{C}_{\text{ex}}) + p\,(\mathcal{C}_{\text{ex}} + \mathcal{C}_{\text{pt}}(1)),
\end{align} 
Cancelling $\mathcal{C}_{\text{ex}}$ from both sides yields
\begin{align}
	p\,\omega \le  p\, \mathcal{C}_{\text{pt}}(1),
\end{align}
which holds under the cost ordering condition $\ell\,\omega < \mathcal{C}_{\text{pt}}(\ell)$ (see Eq.\eqref{eq:cost_ordering}).

\item \textbf{Case 2: $1 < d_1 \le T$}.  
No expiration occurs, and the expected cost inequality becomes
\begin{align}
(1-p)\omega + p\,2\omega \le (1-p)\mathcal{C}_{\text{pt}}(1) + p\,\mathcal{C}_{\text{pt}}(2),\end{align}
which also holds by the ordering in Eq.~\eqref{eq:cost_ordering}.
\end{itemize}

Thus, $a = 0$ is optimal for all $s = (\infty, \infty, d_1)$ at iteration $n = 1$.
For $s'$, the cost is identical when $1 < d_1 \le T$ and smaller when $d_1 = 1$. Hence, $a = 0$ is also optimal at $s'$, i.e.,
\begin{align}
V_1(s') \le V_1(s),
\end{align}
where $V_1(\cdot) = Q_1(\cdot, 0)$.
\vspace{0.2cm}

\noindent\textbf{Inductive step:} 
Assume that $a = 0$ is optimal at state $s$ at iteration $n + 1$, and that
\begin{align}
V_n(s') \le V_n(s).
\end{align}

To prove that $a = 0$ is also optimal for $s'$, we first show that
\begin{align}
Q(s',0) \le Q(s,0), \label{eq:inductiveicv}
\end{align}
for all $s = (\infty,\, \infty,\, d_1)$.

Since $s'$ differs from s only by a one-step increase, \(\textsc{ic}(\cdot,0,e)\) is identical for $s$ and $s'$ when $1 < d_1 \le T$, and smaller for $s'$ when $d_1 = 1$, as state $s$ incurs an expiration cost. Hence,
\begin{align}
\textsc{ic}(s',0,e) \le \textsc{ic}(s,0,e), \quad \forall e \in {0,1}.
\end{align}

Moreover, for each event $e$, the next-state mapping satisfies
\begin{align}
f(s',0,e) = s'^{(e)}, \quad f(s,0,e) = s^{(e)},
\end{align}
where $s'^{(e)}$ corresponds to $s^{(e)}$ with all finite deadlines increased by one.
By the inductive hypothesis $V_n(s') \le V_n(s)$, it follows that
\begin{align}
V_n(f(s',0,e)) \le V_n(f(s,0,e)), \quad \forall e \in {0,1}.
\end{align}
Substituting these relations into yields the desired inequality in Eq.~\eqref{eq:inductiveicv}:
\begin{align}
Q(s',0) \le Q(s,0).
\end{align}

Next, we show that
\begin{align}
Q(s',1) \le Q(s,1), \label{eq:qeq}
\end{align}
for all $s$.

When action $a = 1$ is taken, the truck is dispatched regardless of its remaining deadline. Therefore, both the immediate and future costs are independent of $d_1$, implying  Eq.~\eqref{eq:qeq}.

 Combining Eqs. \eqref{eq:inductiveicv} and \eqref{eq:qeq}, we obtain
\begin{align}
Q(s',0) \le Q(s',1),
\end{align}
which proves that $a = 0$ remains optimal at $s'$ and establishes
\begin{align}
V_{n+1}(s') \le V_{n+1}(s).
\end{align}
 
\noindent \textbf{\textit{b) Diagonal Monotonicity:}
}Consider $s = (\infty, d_2, d_1)$ and $s' = (\infty, d_2 + 1, d_1 + 1)$.
We proceed by induction on n.
\vspace{.2cm}

\noindent\textbf{Base case ($n=1$):}  
We claim that action $a = 0$ is optimal for all states $s = (\infty,\, d_2,\, d_1)$ with $d_1, d_2 \le T$ at iteration $n = 1$, i.e., \begin{align}
V_1(s) = Q(s,0).
\end{align}

We analyze two representative cases, assuming $d_1, d_2 \le T$, and compare actions $a = 0$ and $a = 1$:
\begin{itemize}
\item \textbf{Case 1: $ d_1 = 1 $}.  
\begin{align}
   \notag (1 &- p)(\mathcal{C}_{\text{ex}} + \omega) + p(\mathcal{C}_{\text{ex}} + 2\omega) \leq  \\
    & (1 - p)(\mathcal{C}_{\text{ex}} + \mathcal{C}_{\text{platoon}}(1)) + p(\mathcal{C}_{\text{ex}} + \mathcal{C}_{\text{platoon}}(2)).
\end{align}
   
Cancelling $ \mathcal{C}_{\text{ex}} $ from both sides yields:
    \begin{align}
    (1 - p)\,(\omega - \mathcal{C}_{\text{pt}}(1)) + p\,(2\omega - \mathcal{C}_{\text{pt}}(2)) \leq 0, \label{eq:case1ineq}
    \end{align}
and holds under the cost ordering $\ell\,\omega < \mathcal{C}_{\text{pt}}(\ell)$.

\item \textbf{Case 2: $ 1 < d_1 < d_2 \le T $}.  
No expiration occurs, and the expected cost inequality becomes
\begin{align}
2\omega \le \mathcal{C}_{\text{platoon}}(2),
\end{align}
which again holds by the ordering in Eq.~\eqref{eq:cost_ordering}.
\end{itemize}

Thus, $a = 0$ is optimal for all $s = (\infty, d_2, d_1)$ at iteration $n = 1$. For $s'$, the cost is identical when $1 < d_1, d_2 \le T$, and smaller when $d_1 = 1$, since $s$ then incurs expiration penalties. Hence, $a = 0$ is also optimal at $s'$, i.e.,
\begin{align}
V_1(s') \le V_1(s),
\end{align}
where $V_1(\cdot) = Q_1(\cdot, 0)$.
\vspace{0.2cm}

\noindent\textbf{Inductive step:} 
Assume that $a = 0$ is optimal at state s at iteration $n + 1$, and that
\begin{align}
V_n(s') \le V_n(s).
\end{align}

As before, to prove that $a = 0$ is also optimal for $s'$, we first show that
\begin{align}
Q(s',0) \le Q(s,0), \label{eq:inductive_diag_icv} \end{align}
for all $s = (\infty,\, d_2,\, d_1)$.

Since $s'$ differs from s only by a one-step increase in each finite deadline, \(\textsc{ic}(\cdot,0,e)\) is identical for $s$ and $s'$ when $1 < d_1, d_2 \le T$, and smaller for $s'$ when $d_1 = 1$, as $s$ then incurs expiration costs. Hence,
\begin{align}
\textsc{ic}(s',0,e) \le \textsc{ic}(s,0,e), \quad \forall e \in {0,1}.
\end{align}

Moreover, for each event $e$, the next-state mapping satisfies
\begin{align}
f(s',0,e) = s'^{(e)}, \quad f(s,0,e) = s^{(e)},
\end{align}
where $s'^{(e)}$ corresponds to $s^{(e)}$ with all finite deadlines increased by one.
By the inductive hypothesis $V_n(s') \le V_n(s)$, it follows that
\begin{align}
V_n(f(s',0,e)) \le V_n(f(s,0,e)), \quad \forall e \in {0,1}.
\end{align}
Substituting these relations yields the desired inequality \eqref{eq:inductive_diag_icv}:
\begin{align}
Q(s',0) \le Q(s,0).
\end{align}

Next, we show that
\begin{align}
Q(s',1) = Q(s,1), \label{eq:qeq_diag}
\end{align}
for all s.
When action $a = 1$ is taken, both trucks are dispatched regardless of their remaining deadlines, making both the immediate and future costs independent of $(d_1,d_2)$. Hence, Eq.~\eqref{eq:qeq_diag} holds.

Combining Eqs.~\eqref{eq:inductive_diag_icv} and \eqref{eq:qeq_diag}, we obtain
\begin{align}
Q(s',0) \le Q(s',1),
\end{align}
which proves that $a = 0$ remains optimal at $s'$ and establishes
\begin{align}
V_{n+1}(s') \le V_{n+1}(s).
\end{align}

\noindent\textbf{\textit{c) Dispatch Monotonicity:} } The first iteration at which dispatching ($a = 1$) can become optimal is $n = 2$, since $a = 0$ is optimal for all states in $V_1$, as established by Properties~\textbf{a)} and~\textbf{b)}.

Assume that $a = 1$ is optimal in state $s$ at iteration $n = 2$. Then, by the Bellman update:
\begin{align}
Q_2(s,1) \le Q_2(s,0),
\end{align}
which simplifies to
\begin{align}
\mathcal{C}_{\text{pt}}(2) + V_1(\infty,\infty,\infty)
\le 2\omega + V_1(\infty, T-1, d_1-1). \label{eq:c_base_ineq}
\end{align}

Using Property~\textbf{(b)}, we have $V_1(\infty,\infty,\infty)=p\omega$, hence
\begin{align}
Q_2(s,1) = (1-p)\mathcal{C}_{\mathrm{pt}}(2) + p\omega.
\end{align}

Since dispatching resets the system, for all arrival outcomes $e \in \{0,1\}$:
\begin{align}
\textsc{ic}(s,1,e) \leq \textsc{ic}(s',1,e), \qquad
f(s,1,e) = f(s',1,e).
\end{align}
Thus,
\begin{align}
Q_2(s',1) = Q_2(s,1). \label{eq:Q_equal}
\end{align}

Furthermore, tightening deadlines (moving from $s$ to $s'$) can only increase or preserve the future holding costs:
\begin{align}
V_1(\infty, T-1, d_1-2) \ge V_1(\infty, T-1, d_1-1). \label{eq:c_monotone_future}
\end{align}
Replacing the right-hand side of Eq.~\eqref{eq:c_base_ineq} with the larger value from Eq.~\eqref{eq:c_monotone_future} yields:
\begin{align}
Q_2(s',1) \le Q_2(s',0),
\end{align}
which establishes that $a = 1$ is optimal in $s'$ for $n=2$.

\medskip
\noindent\textbf{Inductive step:}  
Suppose $a = 1$ is optimal in $s$ at iteration $n+1$:
\begin{align}
Q_{n+1}(s,1) \le Q_{n+1}(s,0). \label{eq:c_inductive_assumption}
\end{align}

Dispatching transitions and costs do not depend on deadlines for $d_1 - 1 > 1$, thus:
\begin{align}
Q_{n+1}(s',1) = Q_{n+1}(s,1). \label{eq:c_inductive_equal}
\end{align}

Tightening deadlines can only increase or preserve the holding cost, by monotonicity established in Properties \textbf{a)} and \textbf{b)}, we have:\begin{align}
Q_{n+1}(s',0) \ge Q_{n+1}(s,0). \label{eq:c_inductive_monotone}
\end{align}
Combining Eqs. \eqref{eq:c_inductive_assumption}--\eqref{eq:c_inductive_monotone}, we obtain:
\begin{align}
Q_{n+1}(s',1) \le Q_{n+1}(s',0),
\end{align}
proving that dispatching remains optimal in $s'$ at  $n+1$.\medskip

\noindent\textbf{Conclusion:}
Properties \textbf{a)}, \textbf{b)}, and \textbf{c)} are thus established for all $n \ge 1$, completing the proof.\end{proof} 

In short, Properties \textbf{a)} and \textbf{b)} follow from the fact that relaxing deadlines from $s$ to $s'$ reduces the risk of expirations while increasing the opportunity to form full platoons. More formally, this implies $Q(s, 0) \ge Q(s', 0)$ for all $s$. Therefore, if holding is optimal at a state $s$ with tighter deadlines, it remains optimal at $s'$ when additional slack is available.

Property \textbf{c)} follows from the fact that tightening the deadlines from $s$ to $s'$ increases the urgency of dispatching, since $s'$ is closer to expiration. More formally, this implies that $Q(s, 1) = Q(s', 1)$ for all states $s$ with component  $d_1 \ge 3$.

Figure~\ref{fig:policy_comparison} illustrates the above properties across three representative scenarios\footnote{For $L=3$, a two-dimensional representation is possible since $d_3 = \infty$; with finite $d_3$, dispatch is automatic and no decision remains to be taken.}. The figure also highlights the set of \textit{unreachable states}, which are \textbf{valid in principle}---that is, at least one sequence of arrivals and actions could lead to them---but are never visited under $\pi^\star$. These states are discussed next.





\subsection{Existence of Unreachable States}\ifbulletlist
{\color{blue} 
\begin{enumerate}
\item Definition of unreachable states.
    \item Lemma 1: Unreachable states in the vertical direction.
    \item Lemma 2: Unreachable states in the diagonal direction.
    \item Lemma 3: Unreachable states in both vertical and diagonal directions simultaneously.
\end{enumerate}
}\fi




A state $s \in \mathcal{S}$ is said \emph{unreachable} if it is never visited along any feasible trajectory 
$\{s_0, s_1, \dots, s_{N-1}\}$ induced by the optimal policy
 $\pi^\star$ and the stochastic arrival process 
$\{e_n\}_{n\,=\,0}^{N-1}$. 

Theorem~\ref{thm:unreachable_states} summarizes the key properties regarding unreachable states satisfied by the optimal policy.

\begin{theorem}
\label{thm:unreachable_states}
The optimal policy in the platooning model satisfies the following properties regarding unreachable states:
\begin{enumerate}
    \item[\textbf{d)}] \textbf{Tail unreachable}:  
        If the dispatching action is optimal at state $s = (\infty,\, \infty,\, d_1)$, then all states of the form $(\infty,\, \infty,\, d_1 - k)$, for $k = 1, \dots, d_1 - 1,$
    are unreachable.
    \item[\textbf{e)}] \textbf{Diagonal unreachable}:  
     If the dispatching action is optimal at state $s = (\infty,\, d_2,\, d_1)$, then all states of the form $(\infty,\, d_2 - k,\, d_1 - k),$ for $k = 1,  \dots, \min(d_1, d_2) - 1$
    are unreachable.
\end{enumerate}
\end{theorem}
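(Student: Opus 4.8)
The plan is to prove unreachability by a predecessor-elimination argument: a state is unreachable under $\pi^\star$ exactly when every transition that could land on it is never executed. Since dispatching ($a=1$) always resets the system to $(\infty,\infty,\infty)$, the only transitions that \emph{tighten} deadlines are the hold-with-no-arrival moves ($a=0,e=0$), which decrement every finite component by one. Thus the forbidden states lie strictly ``below'' the given dispatch-optimal state, and the only way to descend onto them is to hold at a state where the optimal action is in fact to release. The engine of the proof is therefore the dispatch monotonicity already established in Property \textbf{c)} (together with its one-truck analogue), which guarantees that the whole region of tighter deadlines inherits the release action.

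For Property \textbf{e)} (diagonal) I would first fix a dispatch-optimal state $s=(\infty,d_2,d_1)$ and a target $s_k=(\infty,d_2-k,d_1-k)$ with $1\le k\le d_1-1$. I would then enumerate the preimages of $s_k$ under $f(\cdot,0,\cdot)$: since $s_k$ has second deadline $d_2-k<T$ and is a genuine two-truck state, it can be reached neither by an arrival (which inserts deadline $T$, forcing the second component to equal $T$) nor by the expiration of a leading truck (which yields a one-truck state). Hence its unique predecessor is the diagonal state $(\infty,d_2-k+1,d_1-k+1)$ via $a=0,e=0$. By Property \textbf{c)} this predecessor is itself dispatch-optimal, so the hold transition onto $s_k$ is never taken; iterating down the diagonal from $k=1$ establishes unreachability for all $k$.

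For Property \textbf{d)} (tail) the same scheme applies to a one-truck target $(\infty,\infty,m)$ with $m<d_1$, but now there are two preimages to eliminate. The first is the axis predecessor $(\infty,\infty,m+1)$ via $a=0,e=0$, blocked by one-truck dispatch monotonicity (the axis analogue of Property \textbf{c)}, proved by the same induction): dispatch-optimality at $(\infty,\infty,d_1)$ propagates to every $(\infty,\infty,j)$ with $2\le j\le d_1$, so the policy releases rather than holds there. The second preimage is $(\infty,m+1,1)$ via $a=0,e=0$, where the leading truck expires and the trailing truck drops onto the axis with deadline $m$. This front-expiration route is the one genuinely new ingredient and is the crux of the argument.

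The hard part will therefore be showing that the optimal action at $(\infty,m+1,1)$ is to release whenever $m<d_1$, so that this predecessor is blocked as well. The plan is to compare $Q_n((\infty,m+1,1),0)$ with $Q_n((\infty,m+1,1),1)$ directly. Holding incurs the unavoidable expiration penalty $\mathcal{C}_{\text{ex}}$ for the leading truck and then deposits the survivor at $(\infty,\infty,m)$, a state at which release (or, at $m=1$, a further expiration) is already optimal; releasing now incurs the same $\mathcal{C}_{\text{ex}}$ plus $\mathcal{C}_{\text{pt}}(1)$ but removes the extra waiting and the risk of the survivor's own expiration. Using the cost ordering $\mathcal{C}_{\text{pt}}(1)<\mathcal{C}_{\text{ex}}$ and $\omega<\mathcal{C}_{\text{pt}}(1)$ together with the dispatch-optimality inherited at $(\infty,\infty,m)$, I expect to obtain $Q_n((\infty,m+1,1),1)\le Q_n((\infty,m+1,1),0)$, closing the front-expiration route. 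With both preimages blocked, and arrivals unable to produce an interior one-truck state (they force deadline $T$), the target $(\infty,\infty,m)$ is never visited, and induction on $m$ downward from $d_1-1$ completes the proof.
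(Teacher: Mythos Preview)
Your predecessor-elimination argument is correct and is actually \emph{more} careful than the paper's own proof. The paper argues essentially in one line: reaching $(\infty,\infty,d_1-k)$ ``requires holding for $k$ consecutive slots'' from $s$, and since $\pi^\star$ releases at $s$, this never happens (and analogously for the diagonal). In other words, the paper tacitly assumes the only route into a tighter one-truck state is the axis hold move, without addressing the front-expiration predecessor $(\infty,m+1,1)\xrightarrow{a=0,e=0}(\infty,\infty,m)$ that you flag. Your enumeration of preimages is the right way to make the claim rigorous, and your treatment of Property~\textbf{e)} matches the paper's idea while being explicit about why no arrival or expiration can land on an interior diagonal state.

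The one place your plan can be tightened is the front-expiration step for Property~\textbf{d)}. Rather than the cost-ordering comparison you sketch, observe directly from the instantaneous-cost and transition tables that for every $e\in\{0,1\}$ and both actions $a\in\{0,1\}$,
\[
\textsc{ic}\big((\infty,m{+}1,1),a,e\big)=\mathcal{C}_{\text{ex}}+\textsc{ic}\big((\infty,\infty,m{+}1),a,e\big),
\qquad
f\big((\infty,m{+}1,1),a,e\big)=f\big((\infty,\infty,m{+}1),a,e\big),
\]
so $\Delta_n(\infty,m{+}1,1)=\Delta_n(\infty,\infty,m{+}1)$. Since the axis analogue of Property~\textbf{c)} (equivalently, the contrapositive of Property~\textbf{a)}) gives release at $(\infty,\infty,m{+}1)$ for all $m{+}1\le d_1$, you immediately get release at $(\infty,m{+}1,1)$, closing the expiration route without any separate inequality chase. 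What your approach buys over the paper's is an honest accounting of \emph{all} preimages; what the paper's buys is brevity, at the cost of leaving that second predecessor implicit.
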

\begin{proof}

\textbf{\textit{d) Tail Unreachable:}}
Consider a state $s = (\infty,\, \infty,\, d_1)$ where dispatching is optimal.
By the state dynamics, reaching any state of the form
$(\infty,\, \infty,\, d_1 - k), \quad 1 \le k < d_1$,
requires holding for $k$ consecutive slots so that the last deadline decreases from $d_1$ to $d_1 - k.$

However, under the optimal policy $\pi^\star,$ the system dispatches immediately at $s$, precluding any such waiting.
Therefore, no state $(\infty, \infty, d_1 - k)$ for $k>0$  can be visited, rendering each $(\infty, \infty, d_1 - k)$ unreachable.

\noindent\textbf{\textbf{e) Diagonal Unreachable:}}
Let \(s = (\infty, d_2, d_1)\) with \(d_1 < d_2\) be a dispatching state under \( \pi^\star \). Any state on the diagonal
\[
(\infty, d_2 - k, d_1 - k), 
\quad 1 \le k \le \min(d_1, d_2)-1,
\]
can only be reached from \(s\) by holding for \(k\) slots, causing both deadlines to decrease by \(k\). Since \( \pi^\star \) dispatches immediately at \(s\), such non-dispatching trajectories never occur. Therefore, all these diagonal predecessor states are unreachable under the optimal policy.

\noindent\textbf{Conclusion:} Properties \textbf{d)} and \textbf{e)} are thus established under the optimal policy, completing the proof.
\end{proof}

In short, Properties \textbf{d)} and \textbf{e)} follow from the observation that if reaching  $s'$ requires first passing through a dispatching state $s$, then $s'$ is unreachable. For instance, consider $s = (\infty,\, d_2,\, d_1)$. Due to the deadline decrement mechanism, reaching $s' = (\infty,\, d_2 - 2,\, d_1 - 2)$ requires first visiting the intermediate state $s'' = (\infty,\, d_2 - 1,\, d_1 - 1)$, and ultimately $s$.

More generally, all states along the diagonal of simultaneously decreasing deadlines, i.e., $(\infty,\, d_2 - k,\, d_1 - k$), become unreachable once a dispatch occurs at $s$. In other words, dispatching at $s$ effectively excludes an entire region of the state space from being visited under the optimal policy.

A special case of Property  \textbf{d)} corresponds to the \textit{immediate release} scenario in Figure~\ref{fig:subfig_a}. In this case, it is optimal to dispatch trucks immediately upon arrival, meaning they never wait at the station. Property \textbf{e)} is depicted in Figures~\ref{fig:subfig_b} and~\ref{fig:subfig_c}.

\begin{figure*}[htpb]
\vspace{.2cm}
\centering
    \begin{subfigure}{0.32\linewidth}
        \includegraphics[width=\linewidth]{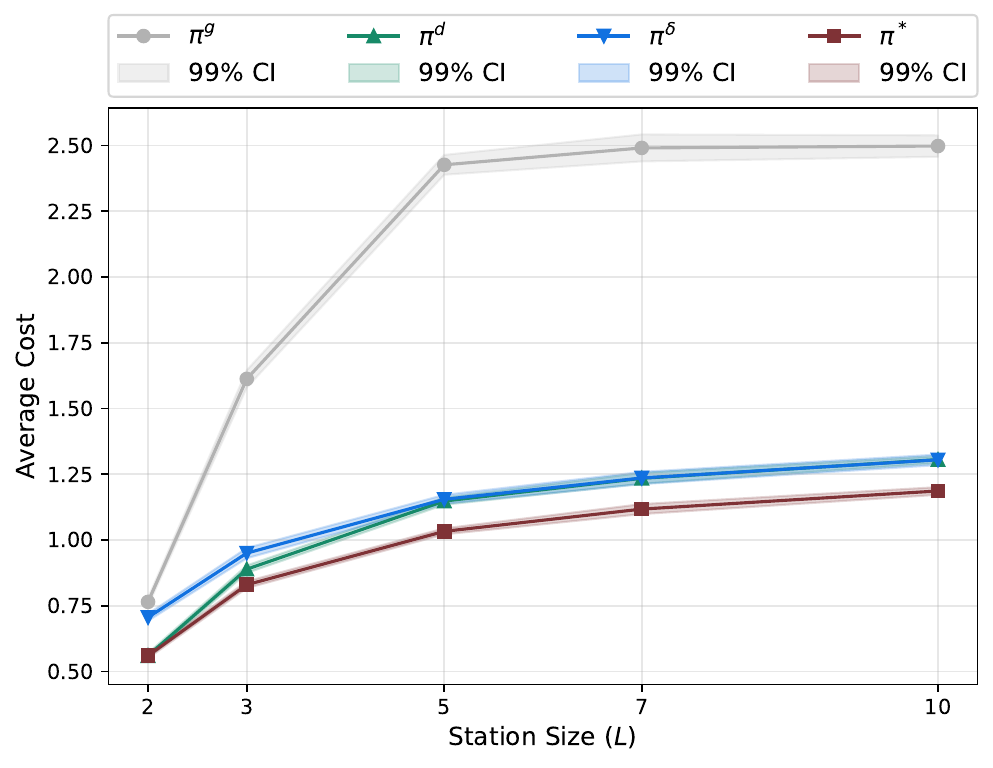}
        \caption{$p=0.1$, $\mathcal{C}_{ex}=15$.}
        \label{res:subfig_a}
    \end{subfigure}
    \begin{subfigure}{0.32\linewidth}
        \includegraphics[width=\linewidth]{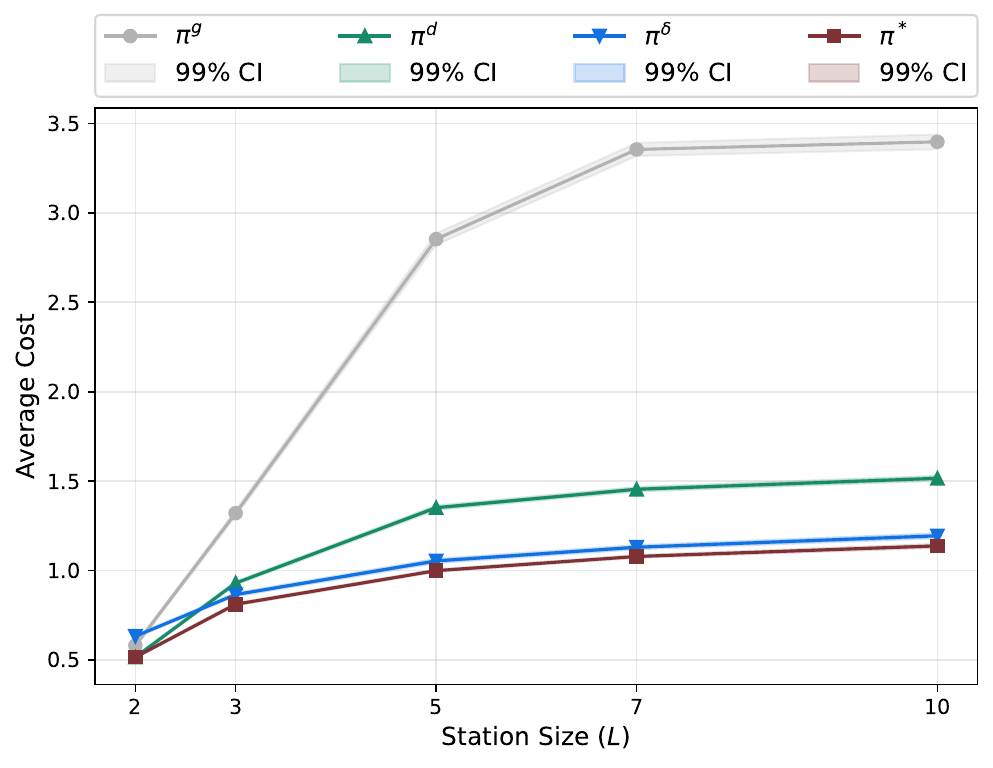}
        \caption{$p=0.2$, $\mathcal{C}_{ex}=7$.}
        \label{res:subfig_b}
    \end{subfigure}
    \begin{subfigure}{0.32\linewidth}
    \includegraphics[width=\linewidth]{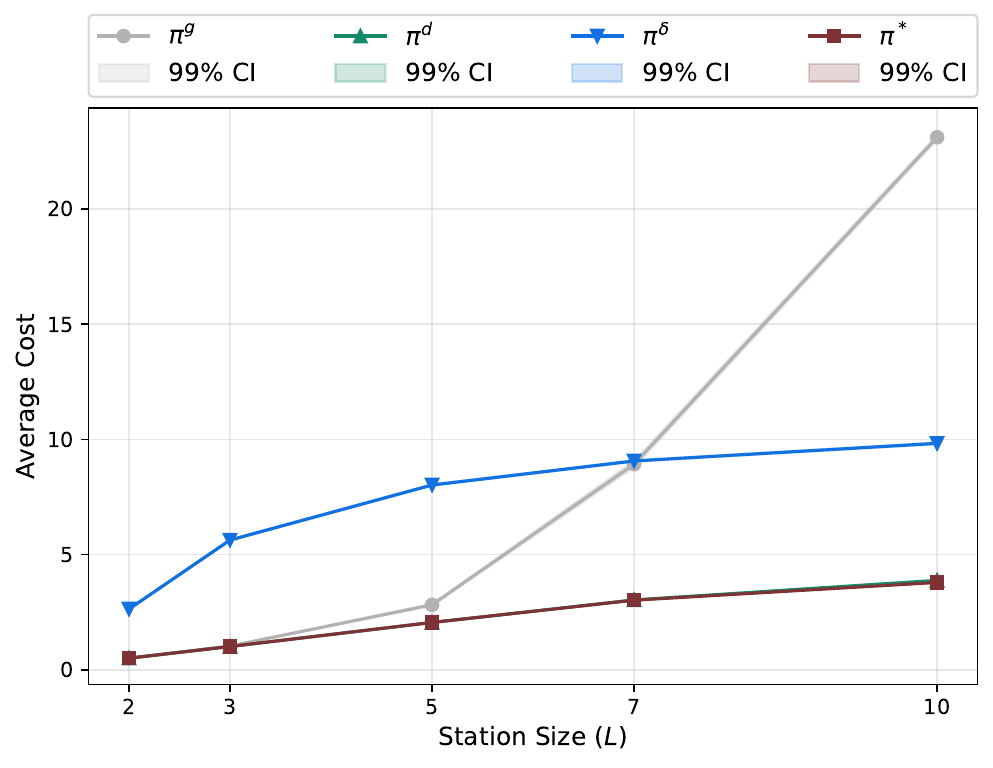}
        \caption{$p=0.6$, $\mathcal{C}_{ex}=30$.}
        \label{res:subfig_c}
    \end{subfigure}
\caption{Discrete event simulations with fixed $T=10$, $\omega=1$, $\gamma=1$.}
\label{fig:des_comparison}\vspace{-.2cm}
\end{figure*}

Properties \textbf{a)} to \textbf{e)} can be shown to extend to arbitrary values of $L$. However, computing $\pi^\star$ (i.e., solving Eq.~\eqref{eq:dp} or its analogous form) quickly becomes computationally intractable---a manifestation of the well-known \textit{curse of dimensionality} in dynamic programming~\cite{bellman1957dynamic}. In the sequel, we introduce three heuristic policies to handle arbitrary station sizes.

 

\section{Heuristic Policies}\label{sec:heuristics}
\ifbulletlist
{\color{blue} 
\begin{enumerate}
\item Impact of $L$ and $T$ on state space dimensionality.
    \item Motivation for heuristic policies.
    \item Description of the Greedy policy (release only when station is full).
    \item Description of the Deadline policy (never let trucks expire).
    \item Description of the $\delta$-Deadline policy (configurable threshold parameter $\delta$).

  \end{enumerate}
}\fi

Define three heuristic policies---Greedy, Deadline, and $\delta$-Deep---that leverage the cost ordering in Eq.~\eqref{eq:cost_ordering} and the monotonicity of $\pi^\star$. These policies are designed to achieve near-optimal performance while reducing modelling and computational complexity relative to the DP solution. 

\paragraph{Greedy Policy ($\pi^{\text{g}}$)} This policy dispatches trucks \emph{only} when a full platoon is available. Formally,\begin{align}
\pi^{\text{g}}(s) =
\begin{cases}
1, & \text{if } |s| = L, \\
0, & \text{otherwise}.
\end{cases}
\end{align}
As shown in the next section, $\pi^{\text{g}}$ can incur higher expiration penalties under low arrival rates, since it depends solely on $|s|$ and ignores individual deadlines.

\paragraph{Deadline Policy ($\pi^{\text{d}}$)} This policy leverages Properties \textbf{a)} and \textbf{d)} along with the optimality of full platoons. A dispatch occurs either when $|s| = L$ or when the earliest truck is about to expire (i.e., $d_1 = 2$). Formally,
\begin{align}
\pi^{\text{d}}(s) =
\begin{cases}
1, & \text{if } |s| = L \text{ or } d_1 = 2, \\
0, & \text{otherwise}.
\end{cases}
\end{align}

\paragraph{$\delta$-Deep Policy ($\pi^{\delta}$)} This policy refines $\pi^\text{d}$ by learning an adaptive threshold $\delta$. Specifically, $\pi^\delta$ releases full platoons or when $d_1 = \delta$. Formally, \begin{align}\pi^{\delta}(s) =
\begin{cases}
1, & \text{if } |s| = L \text{ or } d_1 = \delta, \\
0, & \text{otherwise},
\end{cases}
\end{align}
where
$\delta = \mathrm{NeuralNetwork}(L,\, T,\, p,\, \mathcal{C}{\text{ex}},\, \omega,\, \gamma)$
is produced by a \textit{neural network} trained to approximate the optimal policy~$\pi^\star$.
The network is implemented as a feedforward model with six input features corresponding to $(L,\, T,\, p,\, \mathcal{C}_{\text{ex}},\, \omega,\, \gamma)$, followed by two hidden layers with 256 and 512 ReLU-activated neurons, respectively. 

The output layer employs a \textit{softmax} activation that produces a one-hot probability vector over the possible deadline states $d_1 \in {1, \dots, T}$, allowing the network to predict the most likely optimal action associated with each remaining deadline.

Formally, the architecture is defined as
\begin{align*}
	&\text{Input}(6) \rightarrow \text{Dense}(256, \text{ReLU}) \rightarrow \\ &\quad \quad \text{Dense}(512, \text{ReLU}) \rightarrow \text{Dense}(T, \text{Softmax})
\end{align*} 
as illustrated in Figure~\ref{fig:NN}.
\begin{figure}[hbt]
\centering
\includegraphics[width=\linewidth]{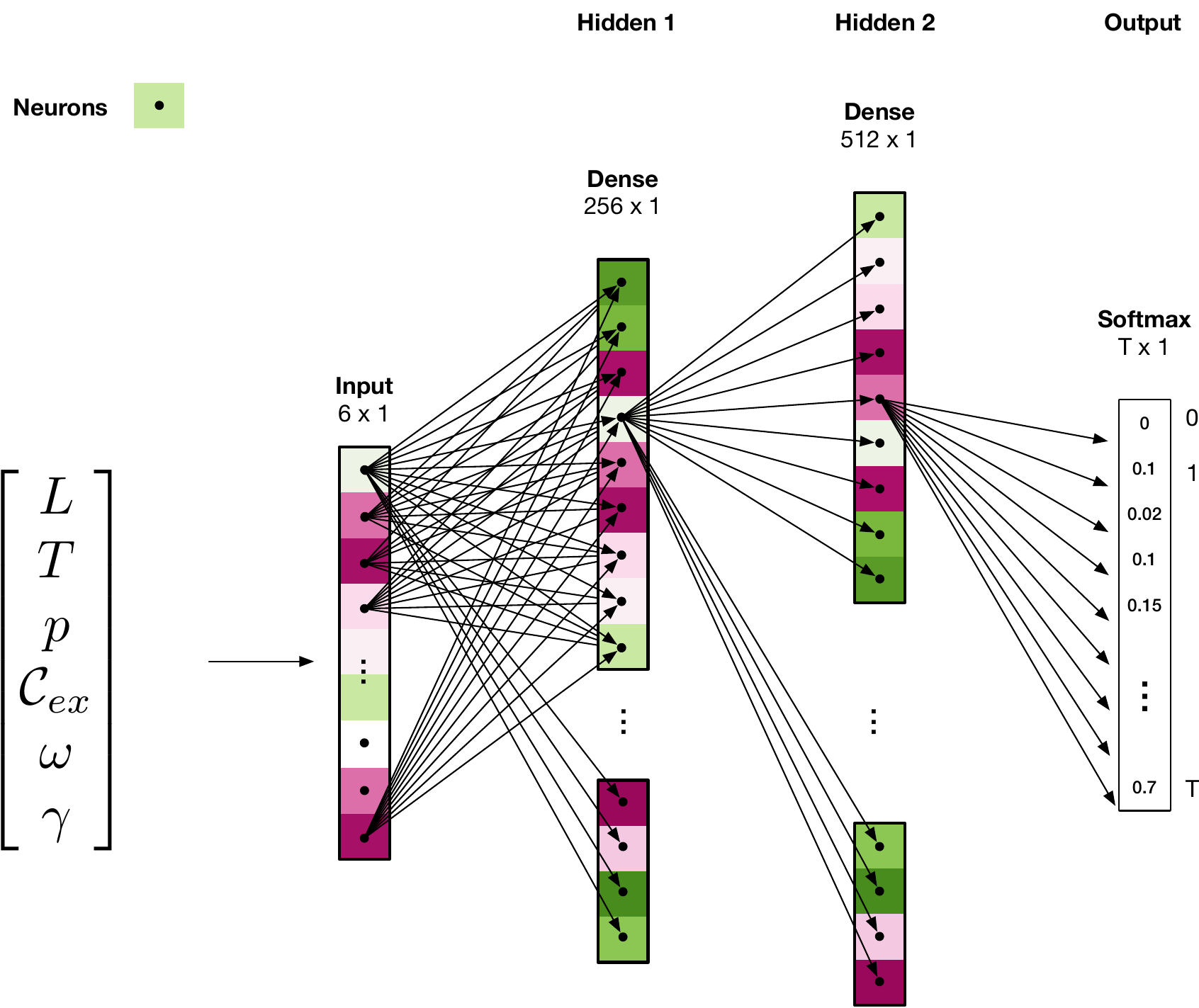}
\caption{Neural network representation of the learned policy.}
\label{fig:NN}
\end{figure}

The output $\delta \in \mathbb{R}^T$ corresponds to a categorical probability distribution over the possible deadline values $d_1 \in {1, \dots, T}$. The most probable threshold, $\arg\max(\delta)$, is selected as the predicted optimal decision for the given system parameters $(L,\, T,\, p,\, \mathcal{C}_{\text{ex}},\, \omega,, \gamma)$.
\begin{remark}
Predicting the optimal $\delta$ alone is insufficient to reconstruct $\pi^\star$, as $d_i$ for $i\in \{2, \dots, T\}$ are disregarded. Rather, $\delta$ identifies the optimal dispatch value of $d_1$, indicating how long a truck should ideally remain at the station.
\end{remark}


Remarkably, both $\pi^\text{g}$ and $\pi^\text{d}$ have constant time complexity, $\mathcal{O}(1)$, as they rely solely on simple conditional checks. In contrast, $\pi^\delta$ is implemented as a lightweight neural network with $|\theta| = 133{,}889$ trainable parameters. Its inference complexity is constant with respect to $L$ and $T$, $\mathcal{O}(1)$, or equivalently linear in the number of parameters, $\mathcal{O}(|\theta|)$. A single forward pass requires roughly $2|\theta| \approx 2.7\times10^5$ floating-point operations, corresponding to only a few  milliseconds on a standard CPU.

Its training complexity, however, scales combinatorially, since both the training and test datasets are generated by repeatedly solving the DP equation---whose complexity itself grows combinatorially---approximately $10^4$ times for instances with $|\mathcal{S}| \le 9 \times 10^4$. After fine-tuning, $\pi^\delta$ achieves over $89\%$ accuracy in predicting $\delta$ for the evaluated scenarios.

\section{Performance Analysis}
\ifbulletlist
{\color{blue} 
\begin{enumerate}
    \item Discrete-event simulation comparing three policy classes.
    \item (If space allows) performance comparison with 4D policies.
 \end{enumerate}
}\fi
\label{sec:performance}

In this section, we assess our heuristics via discrete event simulations, comparing them to the optimal policy obtained through dynamic programming. Six scenarios are considered, each tested with 100 randomly coupled simulation runs, each spanning $10^6$ steps. Coupling ensures that variations in system performance arise solely from differences in the underlying policies~\cite{11161316}. Performance is assessed in terms of average operational cost, along with 99\% confidence intervals.


Figure~\ref{fig:des_comparison} presents results for $L = 2, 3, 5, 7,$ and $10$, assuming fixed  values for $T=10$, $\omega=1.0$, and $\gamma=1.0$. Across all policies, the confidence intervals are tightly concentrated around the mean, with the largest variation observed for $\pi^\text{g}$.

In Figure~\ref{res:subfig_a}, we illustrate a scenario with infrequent truck arrivals ($p = 0.1$). Notably, $\pi^{g}$ incurs the highest cost due to the low arrival rate, which limits the formation and release of full platoons. As a result, trucks are more likely to expire, triggering penalties and reducing overall performance. For $L = 2$, $\pi^{d}$ performs comparably to $\pi^{\star}$; however, as $L$ increases, its performance approaches that of $\pi^{\delta}$. On average, $\pi^{\delta}$ and $\pi^{d}$ are approximately 8\% less efficient than $\pi^{\star}$, with $\pi^{d}$ offering the advantage of lower computational cost, $\mathcal{O}(1)$.

\begin{figure*}
\vspace{.2cm}
\begin{subfigure}{0.32\linewidth}
\includegraphics[width=\linewidth]{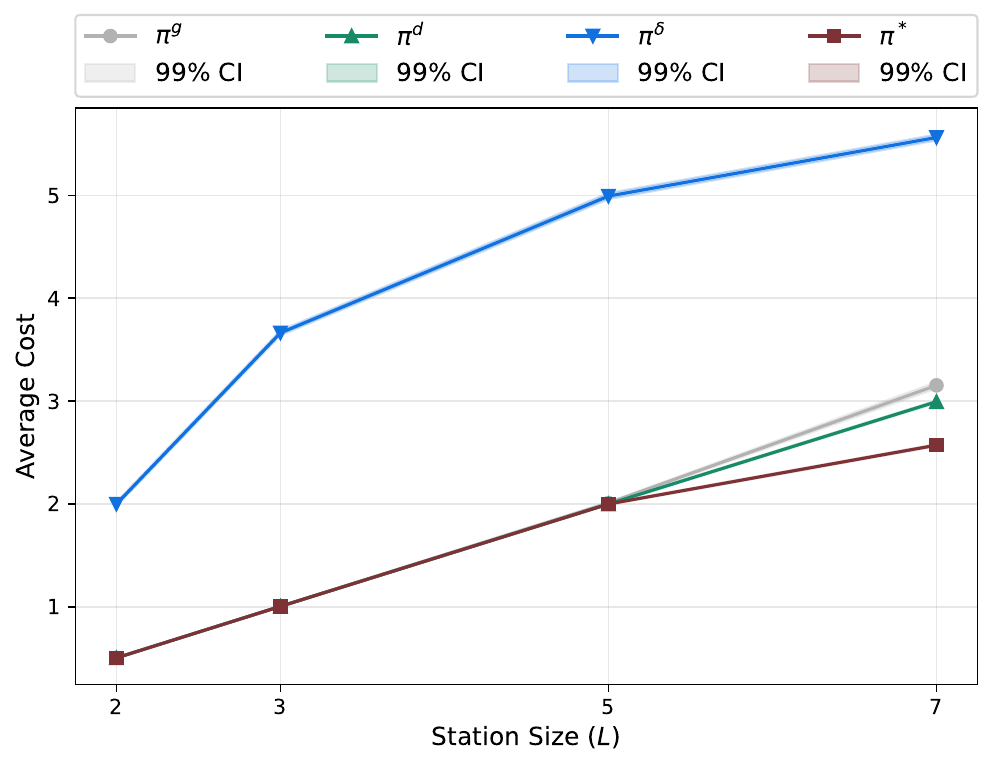}
        \caption{$T=20$, $p=0.5$, $\mathcal{C}_{ex}=25$, $\omega=1$, $\gamma=0.8$, $T=20$.}
        \label{res:2:subfig_a}
    \end{subfigure}
     \begin{subfigure}{0.32\linewidth}
\includegraphics[width=\linewidth]{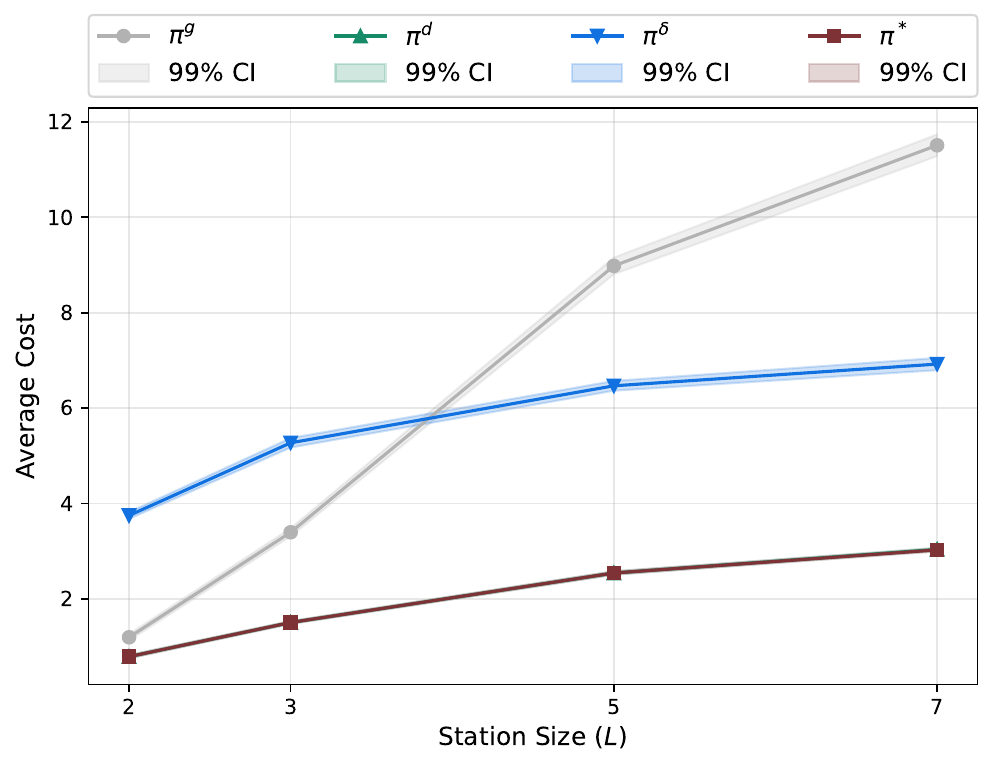}
        \caption{$T=20$, $p=0.1$, $\mathcal{C}_{ex}=100$, $\omega=1$, $\gamma=0.9$.}
        \label{res:2:subfig_b}
    \end{subfigure}
    \begin{subfigure}{0.32\linewidth}
    \includegraphics[width=\linewidth]{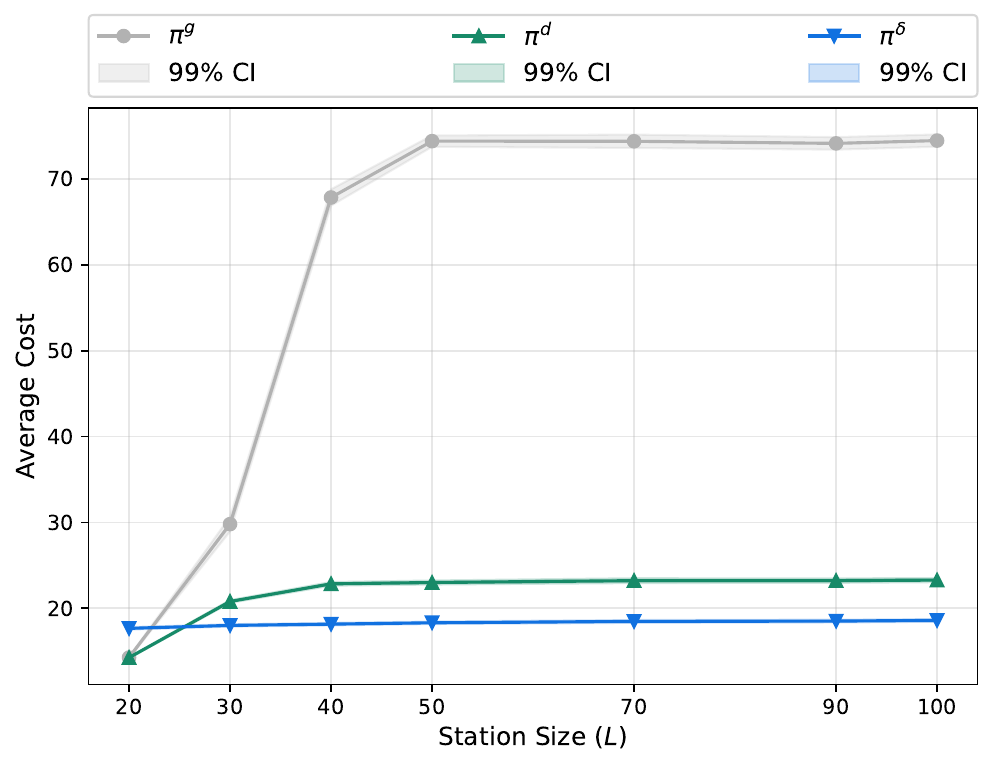}
        \caption{$T=100$, $p=0.3$, $\mathcal{C}_{ex}=100$, $\omega=1.5$, $\gamma=0.8$.}
        \label{res:2:subfig_c}
    \end{subfigure}

\caption{Discrete event simulations for larger $T$.}
\label{fig:des_comparison2}\vspace{-.2cm}
\end{figure*}

In Figure~\ref{res:subfig_b}, with a moderate increase in arrival frequency, $\pi^\delta$ performs comparably to $\pi^\star$, being only about 2\% more costly. In contrast, $\pi^{d}$ is up to 24\% less efficient than $\pi^\delta$. 

In Figure~\ref{res:subfig_c}, however, $\pi^{d}$ performs nearly identically to $\pi^\star$. This behaviour occurs because, at $p = 0.6$, the formation of full platoons becomes highly likely, as three or more arrivals within a window of $T = 10$ slots occur with high probability. Formally, the binomial probability of observing $X \ge 3$ arrivals in 10 slots with $p=0.6$ is $\mathbb{P}(X \ge 3) \approx 0.988$.

Figures~\ref{res:2:subfig_a} and~\ref{res:2:subfig_b} show results for $L = 2, 3, 5,$ and $7$ with $T = 20$, while Figure~\ref{res:2:subfig_c} considers much larger stations ($L = 20$ to $100$) with $T = 100$. The larger $T$ allows trucks to wait longer before expiration, increasing the likelihood of full platoons. Consequently, $\pi^g$ incurs fewer expiration penalties, narrowing its performance gap relative to the other policies.

In Figure~\ref{res:2:subfig_a}, both $\pi^\text{g}$ and $\pi^\text{d}$ achieve optimal performance for $L = 2, 3, 5$, and remain nearly optimal for $L = 7$, with costs approximately $15\%$ higher than $\pi^\star$. Notably, $\pi^\delta$ consistently selects a suboptimal $\delta$ across all values of $L$, likely due to the limited number of training samples for configurations with $L \ge 7$, constrained by $|\mathcal{S}| \le 9 \times 10^4$.

 In Figure~\ref{res:2:subfig_b}, with a high expiration penalty ($\mathcal{C}_{ex} = 100$), $\pi^\text{d}$ achieves optimal performance across all evaluated $L$. In Figure~\ref{res:2:subfig_c}, computing $\pi^\star$ was infeasible due to the size of $|\mathcal{S}|$. Among our heuristics, $\pi^g$ performs significantly worse as $L$ increases. Although additional waiting time allows larger platoons to form, the holding costs may outweigh the benefits of full platoons. $\pi^\delta$ predicts the same threshold $\delta$ across all evaluated station sizes, achieving consistently good performance. $\pi^d$ is roughly 25\% less efficient than $\pi^\delta$.


Overall, the results demonstrate that $\pi^\text{d}$ provides a robust balance between performance and computational efficiency, closely matching $\pi^\star$ across most scenarios, particularly when expiration penalties are high. $\pi^\text{g}$ performs well under frequent arrivals but suffers under low arrival rates, while $\pi^\delta$ adapts to system conditions, achieving near-optimal performance when sufficiently trained, albeit at a higher computational cost.

\section{Conclusion}
\ifbulletlist
{\color{blue} 
\begin{enumerate}
    \item 
 \end{enumerate}
}\fi\label{sec:conclusion}

In this work, we studied the optimal formation of truck platoons at highway stations with capacity $L$ and deadline $T$. For $L = 3$, we proved structural properties of $\pi^\star$, including monotonicity, and identified classes of unreachable states. We then showed how these results extend to larger systems.

Building on the structural insights of $\pi^\star$, we designed near-optimal, low-complexity heuristic policies. Among them, $\pi^d$ consistently balances performance and computational efficiency, $\pi^g$ performs well under frequent arrivals but incurs higher expiration penalties at low arrival rates, and $\pi^\delta$ achieves near-optimal performance when sufficient training data are available, with  a modest increase in modelling complexity.

This work can be extended in several directions. One possibility is to incorporate multiple classes of trucks, each with distinct deadlines, to capture heterogeneous delivery priorities in freight logistics. More sophisticated heuristics could likewise be explored, including those that infer optimal actions for all $d_i$ values rather than only the leading truck. Additionally, the model could be generalized to support partial platoon dispatch, where the controller may release a subset of trucks instead of relying on a strict \textit{all-or-nothing} policy.

\bibliographystyle{IEEEtran}

\bibliography{references} 

\end{document}